\newtheorem{theorem}{Theorem}
\theoremstyle{remark}
\newtheorem{remark}[theorem]{Remark}
\numberwithin{equation}{section}
\begin{document}

\title[hyperbolic Ruijsenaars-Schneider system with Morse term]
{Spectrum and eigenfunctions of the lattice hyperbolic Ruijsenaars-Schneider system with exponential Morse term}

\author{J.F.  van Diejen}

\address{
Instituto de Matem\'atica y F\'{\i}sica, Universidad de Talca,
Casilla 747, Talca, Chile}

\email{diejen@inst-mat.utalca.cl}

\author{E. Emsiz}

\address{
Facultad de Matem\'aticas, Pontificia Universidad Cat\'olica de Chile,
Casilla 306, Correo 22, Santiago, Chile}
\email{eemsiz@mat.puc.cl}

\subjclass[2000]{81Q80, 81R12, 81U15, 33D52.}
\keywords{hyperbolic Ruijsenaars-Schneider system, Morse potential,  scattering operator, bispectral problem, quantum integrability}

\thanks{This work was supported in part by the {\em Fondo Nacional de Desarrollo
Cient\'{\i}fico y Tecnol\'ogico (FONDECYT)} Grants \# 1130226 and  \# 1141114.}

\date{August 2015}

\begin{abstract}
We place the hyperbolic quantum Ruijsenaars-Schneider system with an exponential Morse term on a lattice and diagonalize the resulting $n$-particle model
by means of multivariate continuous dual $q$-Hahn polynomials that arise as a parameter
reduction of the Macdonald-Koornwinder polynomials. This allows to  compute the
$n$-particle scattering operator, to identify the bispectral dual system, and to confirm the quantum integrability in a Hilbert space set-up.
\end{abstract}

\maketitle

\section{Introduction}\label{sec1}
It is well-known that the hyperbolic Calogero-Moser $n$-particle system on the line can be placed in an exponential Morse potential without spoiling the integrability \cite{adl:some,lev-woj:olshanetsky-perelomov}.   An extension of Manin's Painlev\'e-Calogero correspondence
links the particle model in question to a multicomponent Painlev\'e III equation \cite{tak:painleve-calogero}.
Just as for the conventional Calogero-Moser system without Morse potential, the integrability is preserved upon quantization and the corresponding spectral problem gives rise to a rich theory of remarkable novel hypergeometric functions  in several variables \cite{ino-mes:discrete,osh:completely,hal:multivariable,hal-lan:unified}. 

An integrable Ruijsenaars-Schneider type ($q$-)deformation \cite{rui-sch:new,rui:complete} of the hyperbolic Calogero-Moser system with Morse potential was introduced in
\cite{sch:integrable} and in a more general form in \cite[Sec. II.B]{die:difference}. 
Recently, it was pointed out that particle systems of this kind can be recovered from the Heisenberg double of $SU(n,n)$ via Hamiltonian reduction
 \cite{mar:new}. 
 In the present work we  address the eigenvalue problem for a quantization of the latter
 hyperbolic Ruijsenaars-Schneider system with Morse term.
 Specifically, it is shown that the eigenfunctions are given by multivariate continuous dual $q$-Hahn polynomials that arise as a parameter
reduction of the Macdonald-Koornwinder polynomials \cite{koo:askey-wilson,mac:affine}. As immediate
by-products, one reads-off the $n$-particle scattering operator and the
commuting quantum integrals of a bispectral dual system  \cite{dui-gru:differential,gru:bispectral}.

The material is organized is as follows. In Section \ref{sec2} we place the
hyperbolic Ruijsenaars-Schneider system with Morse term from \cite{die:difference} on a lattice. 
The diagonalization of the resulting quantum model in terms of multivariate continuous dual $q$-Hahn polynomials is carried out in Section \ref{sec3}.
In Sections \ref{sec4} and \ref{sec5} the $n$-particle scattering operator and the bispectral dual integrable system are exhibited.
Finally, the quantum integrability of both the hyperbolic Ruijsenaars-Schneider system with Morse term on the lattice and its bispectral dual system are addressed in Section \ref{sec6}.

\section{Hyperbolic Ruijsenaars-Schneider system with Morse term}\label{sec2}
The hyperbolic quantum Ruijsenaars-Schneider system on the lattice was briefly introduced in \cite[Sec. 3C2]{rui:finite-dimensional} and studied in detail from the point of view of its scattering behavior in \cite{rui:factorized}  (see also \cite[Sec. 6]{die:scattering} for a further generalization in terms of root systems).
In this section we formulate a corresponding lattice version of the hyperbolic quantum Ruijsenaars-Schneider system with Morse term introduced in \cite[Sec. II.B]{die:difference}.

\subsection{Hamiltonian}
The Hamiltonian of our $n$-particle model is given by the formal difference operator
\cite[Eqs. (2.25), (2.26)]{die:difference}:
\begin{align}\label{H}
H:=  \sum_{j=1}^n  \Biggl( & \ w_+(x_j) \Bigl(\prod_{\substack{1\leq k\leq n \\ k\neq j}} \frac{t^{-1}-q^{x_j-x_k}}{1-q^{x_j-x_k}} \Bigr)   (T_j-1) \\
 +& \ w_-(x_j) \Bigl(\prod_{\substack{1\leq k\leq n \\ k\neq j}} \frac{t-q^{x_j-x_k}}{1-q^{x_j-x_k}} \Bigr)   (T_j^{-1}-1) \Biggr) ,\nonumber
\end{align}
where
\begin{align}
w_+ (x) &:=\sqrt{\frac{qt_0t_3}{t_1t_2} } (1-{t}_1 q^{x})(1-t_2q^x),\quad
w_- (x) := \sqrt{\frac{t_1t_2}{qt_0t_3}} (1- t_0q^{x})(1-t_3q^x) ,\nonumber
\end{align}
and $T_j$ ($j=1,\ldots ,n$) acts on functions $f:\mathbb{R}^n\to\mathbb{C}$ by a
unit translation of the $j$th position variable $$(T_jf)(x_1,\ldots,x_n)=f(x_1,\ldots,x_{j-1},x_j+1,x_{j+1},\ldots,x_n).$$
Here $q$  denotes a real-valued scale parameter, $t$ plays the role of the coupling parameter for the Ruijsenaars-Schneider inter-particle interaction, and
the parameters ${t}_r$ ($r=0,\ldots ,3$) are coupling parameters governing the exponential Morse interaction.
Upon setting $t_0=\epsilon t^{n-1}q^{-1}$ and $t_r=\epsilon$ for $r=1,2 ,3$,  one has that
$w_\pm(x_j)\to t^{\pm (n-1)/2}$ when $\epsilon\to 0$. We thus recover in this limit the Hamiltonian of the hyperbolic quantum Ruijsenaars-Schneider system given in terms of
Ruijsenaars-Macdonald difference operators \cite{rui:complete,mac:symmetric}. 
By a translation of the center-of-mass of the form $q^{x_j}\to cq^{x_j}$ ($j=1,\ldots ,n$) for some suitable constant $c$, it is possible to normalize one of the $t_r$-parameters to unit value; from now on it will therefore always be assumed that $t_3\equiv 1$ unless explicitly stated otherwise.

\subsection{Restriction to lattice functions}
Let $\rho+\Lambda:=\{\rho +\lambda\mid\lambda\in\Lambda\}$, where 
$\Lambda$ denotes the cone of integer partitions $\lambda=(\lambda_1,\ldots ,\lambda_n)$  with
weakly decreasingly ordered parts $\lambda_1\geq\cdots\geq\lambda_n\geq 0$, and
$\rho=(\rho_1,\ldots ,\rho_n)$ with
 \begin{equation}\label{rho}
 \rho_j=(n-j)\log_q( t)\qquad  (j=1,\ldots,n). 
 \end{equation}
 The action of $H$ \eqref{H}
(with $t_3=1$) preserves the space of lattice functions $f:\rho+\Lambda \to\mathbb{C}$:
  \begin{align}\label{Haction}
(Hf)(\rho +\lambda)=& 
\sum_{\substack{1\leq j\leq n\\ \lambda +e_j\in\Lambda}} v_j^+(\lambda) \bigl( f (\rho+\lambda +e_j)-f(\rho+\lambda) \bigr)\\
+& \sum_{\substack{1\leq j\leq n\\ \lambda -e_j\in\Lambda}}v_j^-(\lambda)\bigl( f (\rho+\lambda -e_j) -f(\rho+\lambda)\bigr) 
 ,\nonumber
\end{align}
where $e_1,\ldots ,e_n$ denotes the standard basis of $\mathbb{R}^n$ and
\begin{align*}
v_j^+(\lambda) =&\sqrt{\frac{qt_0}{t_1t_2} } (1-{t}_1 t^{n-j}q^{\lambda_j})(1-t_2t^{n-j}q^{\lambda_j})
\prod_{\substack{1\leq k\leq n \\ k\neq j}} \frac{t^{-1}-t^{k-j}q^{\lambda_j-\lambda_k}}{1-t^{k-j}q^{\lambda_j-\lambda_k}}  , \\
v_j^-(\lambda) =&\sqrt{\frac{t_1t_2}{qt_0}} (1- t_0t^{n-j}q^{\lambda_j})(1-t^{n-j}q^{\lambda_j}) \prod_{\substack{1\leq k\leq n \\ k\neq j}} \frac{t-t^{k-j}q^{\lambda_j-\lambda_k}}{1-t^{k-j}q^{\lambda_j-\lambda_k}} .
\end{align*}
Indeed, given $\lambda\in\Lambda$, one has that $v_j^+(\lambda)=0$ if $\lambda+e_j\not\in \Lambda$ due to a zero stemming from the
factor $t^{-1}-t^{-1}q^{\lambda_{j-1}-\lambda_j}$ when $\lambda_{j-1}=\lambda_j$ and one has that
 $v_j^-(\lambda)=0$ if $\lambda-e_j\not\in \Lambda$ due to a zero stemming from either the
factor $t-t q^{\lambda_{j}-\lambda_{j+1}}$ when $\lambda_j=\lambda_{j+1}$ or from the factor $(1-q^{\lambda_n})$ when  $\lambda_n=0$.

\section{Spectrum and eigenfunctions}\label{sec3}
Ruijsenaars' starting point in \cite{rui:factorized} is the fact that the hyperbolic quantum Ruijsenaars-Schneider system on the lattice is
diagonalized by the celebrated Macdonald polynomials \cite[Ch.VI]{mac:symmetric}. In this section we show that in the presence of the Morse interaction the role of the Macdonald eigenpolynomials is taken over by multivariate continuous dual $q$-Hahn eigenpolynomials that arise as a parameter reduction of the Macdonald-Koornwinder polynomials \cite{koo:askey-wilson,mac:affine}.

\subsection{Multivariate continuous dual $q$-Hahn polynomials}
Continuous dual $q$-Hahn polynomials are a special limiting case of the Askey-Wilson polynomials in which one of the four Askey-Wilson parameters is set to vanish \cite[Ch. 14.3]{koe-les-swa:hypergeometric}. The corresponding reduction of the Macdonald-Koornwinder multivariate Askey-Wilson polynomials \cite{koo:askey-wilson,mac:affine} is governed  by a weight function of the form
\begin{equation}\label{plancherel}
\hat{\Delta} (\xi ):=\frac{1}{(2\pi)^n}
\prod_{1\leq j\leq n}\Bigl|
\frac{(e^{2i\xi_j})_\infty}{\prod_{0\leq r\leq 2} (\hat{t}_re^{i\xi_j})_\infty}  \Bigr|^2 
\prod_{1\leq j<k\leq n}\Bigl| 
\frac{(e^{i(\xi_j+\xi_k)},e^{i(\xi_j-\xi_k)})_\infty}{(te^{i(\xi_j+\xi_k)},te^{i(\xi_j-\xi_k)})_\infty}\Bigr|^2
\end{equation}
supported on the alcove
\begin{equation}\label{alcove}
\mathbb{A}:=\{ (\xi_1,\xi_2,\ldots,\xi_n)\in\mathbb{R}^n\mid \pi>\xi_1>\xi_2>\cdots >\xi_n>0\} ,
\end{equation}
where $(x)_m:=\prod_{l =0}^{m-1} (1-xq^l)$ and $(x_1,\ldots, x_l)_m:=(x_1)_m\cdots (x_l)_m$ refer to the $q$-Pochhammer symbols, and  it  is assumed that
\begin{equation}\label{pdomain}
q,t\in (0,1)\quad\text{and}\quad \hat{t}_r\in (-1,1)\setminus \{ 0\} \quad (r=0,1 ,2).
\end{equation}
More specifically, the multivariate continuous dual $q$-Hahn polynomials $P_\lambda (\xi)$, $\lambda\in\Lambda$ are defined as the trigonometric polynomials of the form
\begin{subequations}
\begin{equation}\label{qH1}
P_\lambda(\xi) = \sum_{\substack{\mu\in\Lambda\\ \mu \leq\lambda}}
c_{\lambda ,\mu}m_\mu(\xi)\qquad (c_{\lambda ,\mu}\in\mathbb{C})
\end{equation}
such that
\begin{equation}\label{qH2}
c_{\lambda ,\lambda} =
\prod_{1\leq j\leq n} \frac{\hat{t}_0^{\lambda_j}t^{(n-j)\lambda_j}}{ (\hat{t}_0\hat{t}_1t^{n-j},\hat{t}_0\hat{t}_2 t^{n-j} )_{\lambda_j}}
\prod_{1\leq j<k\leq n}
\frac{(t^{k-j})_{\lambda_j-\lambda_k}}{(t^{1+k-j})_{\lambda_j-\lambda_k}}
\end{equation}
and
\begin{equation}\label{qH3}
\int_{\mathbb{A}} P_\lambda (\xi) \overline{P_\mu (\xi)}
\hat{\Delta}(\xi) \text{d}\xi= 0\quad \text{if}\ \mu <\lambda .
\end{equation}
\end{subequations}
Here we have employed the dominance partial order
\begin{equation}\label{porder}
\forall \mu,\lambda\in\Lambda:\quad \mu\leq \lambda \ \text{iff}\
\sum_{1\leq j\leq k} \mu_j \leq \sum_{1\leq j\leq k} \lambda_j \quad\text{for}\quad k=1,\ldots ,n,
\end{equation}
and the symmetric monomials
\begin{equation}
m_\lambda(\xi):=\sum_{\nu\in W\lambda} e^{i(\nu_1\xi_1+\cdots +\nu_n\xi_n)},\qquad \lambda\in\Lambda,
\end{equation}
 associated with the hyperoctahedral group
$W=S_n \ltimes \{1,-1\}^n$ of signed permutations.

The present choice of the leading coefficient $c_{\lambda ,\lambda}$ in Eq. \eqref{qH2} normalizes the polynomials in question such that
$P_\lambda (i\hat{\rho})=1$, where $\hat{\rho}=(\hat{\rho}_1,\ldots ,\hat{\rho}_n)$ is given by $\hat{\rho}_j = (n-j)\log (t) +\log (\hat{t}_0)$, $j=1,\ldots ,n$ (cf. \cite[Sec. 6]{die:properties}, \cite[Ch. 5.3]{mac:affine}).
With this normalization, the orthogonality relations obtained as the degeneration of those for the
Macdonald-Koornwinder polynomials \cite[\text{Sec.}~5]{koo:askey-wilson}, \cite[\text{Sec.}~7]{die:properties}, \cite[\text{Ch.}~5.3]{mac:affine} read:
\begin{subequations}
\begin{equation}\label{qH-ortho-a}
\int_{\mathbb{A}} P_\lambda (\xi) \overline{ P_\mu (\xi)}
\hat{\Delta}(\xi) \text{d}\xi =
\begin{cases}
\Delta_\lambda^{-1} &\text{if}\ \lambda = \mu ,\\
0 &\text{otherwise},
\end{cases}
\end{equation}
where
\begin{align}\label{qH-ortho-b}
\Delta_\lambda  :=&\Delta_0 
 \prod_{1\leq j\leq n}
\frac{(\hat{t}_0\hat{t}_1t^{n-j},\hat{t}_0\hat{t}_2t^{n-j})_{\lambda_j}} {\hat{t}_0^{2\lambda_j}t^{2(n-j)\lambda_j}(qt^{n-j},\hat{t}_1\hat{t}_2t^{n-j})_{\lambda_j}}    \\
&\times \prod_{1\leq j<k\leq n}
\frac{1-t^{k-j}q^{\lambda_j-\lambda_k}}{1-t^{k-j}}
\frac{(t^{1+k-j})_{\lambda_j-\lambda_k}}{(qt^{k-j-1})_{\lambda_j-\lambda_k}} 
\nonumber
\end{align}
and
\begin{equation}\label{qH-ortho-c}
\Delta_0:=
\prod_{1\leq j\leq n} \Bigl( \frac{(q,t^j)_\infty }{(t)_\infty} \prod_{0\leq r<s\leq 2} (\hat{t}_r\hat{t}_st^{n-j})_\infty \Bigr) .
\end{equation}
\end{subequations}

\subsection{Diagonalization}
Let
$\ell^2(\rho+\Lambda,{\Delta})$ denote the Hilbert space of lattice functions ${f}:\rho+\Lambda\to\mathbb{C}$ determined by the inner product
\begin{equation}
\langle {f},{g}\rangle_{{\Delta}}:=
\sum_{\lambda\in\Lambda}{f}(\rho+\lambda)\overline{{g}(\rho+\lambda)}{\Delta}_\lambda
\qquad ({f},{g}\in \ell^2(\rho+\lambda,{\Delta})),
\end{equation}
with $\rho$ and $\Delta_\lambda$ as in Eqs.  \eqref{rho}  and \eqref{qH-ortho-a}--\eqref{qH-ortho-c}, and let $L^2(\mathbb{A},\hat{\Delta} (\xi) \text{d}\xi)$ be the Hilbert space of functions $\hat{f}:\mathbb{A}\to\mathbb{C}$ determined by the inner product
\begin{equation}\label{ip}
\langle \hat{f},\hat{g}\rangle_{\hat{\Delta}}:=\int_{\mathbb{A}}\hat{f}(\xi)\overline{\hat{g}(\xi)}\hat{\Delta}(\xi)\text{d}\xi
\qquad (\hat{f},\hat{g}\in L^2(\mathbb{A},\hat{\Delta} (\xi) \text{d}\xi)),
\end{equation}
with $\hat{\Delta}$ taken from Eq. \eqref{plancherel}. We denote by $\psi_\xi:\rho+\Lambda\to\mathbb{C}$  the lattice wave function given by
\begin{equation}\label{fk}
\psi_\xi (\rho+\lambda):=P_\lambda (\xi) \qquad (\xi\in\mathbb{A},\,\lambda\in\Lambda).
\end{equation}
Then the orthogonality relations in Eqs. \eqref{qH-ortho-a}--\eqref{qH-ortho-c} imply that the associated Fourier transform
 $\boldsymbol{F}: \ell^2(\rho+\Lambda,{\Delta})\to L^2(\mathbb{A},\hat{\Delta}\text{d}\xi)$
of the form
 \begin{subequations}
\begin{equation}\label{ft1}
(\boldsymbol{F}{f})(\xi):= \langle {f},\psi_\xi \rangle_{{\Delta}}=\sum_{\lambda\in\Lambda}f(\rho+\lambda)
\overline{\psi_\xi (\rho +\lambda )}{\Delta}_\lambda
\end{equation}
(${f}\in \ell^2(\rho+\Lambda,{\Delta})$) constitutes a Hilbert space isomorphism
with an inversion formula given by
\begin{equation}\label{ft2}
(\boldsymbol{F}^{-1}\hat{f})(\rho+\lambda) = \langle \hat{f},\overline{\psi (\rho+\lambda) }\rangle_{\hat{\Delta}}=
\int_{\mathbb{A}} \hat{f} (\xi) \psi_\xi(\rho +\lambda )\hat{\Delta}(\xi)\text{d}\xi
\end{equation}
\end{subequations}
($\hat{f}\in L^2(\mathbb{A},\hat{\Delta}\text{d}\xi)$). 

\begin{theorem}\label{diagonal:thm}
Let $\hat{{E}}$ denote the bounded real multiplication operator acting on
$\hat{f}\in L^2(\mathbb{A},\hat{\Delta}\text{d}\xi)$ by $(\hat{{E}}\hat{f})(\xi):=\hat{E}(\xi) \hat{f}(\xi)$ with
\begin{subequations}
\begin{equation}\label{E}
\hat{E}(\xi):=
\sum_{1\leq j\leq n}\bigl( 2\cos(\xi_j) -t^{n-j}\hat{t}_0-t^{j-n}\hat{t}_0^{-1}\bigr) .
\end{equation}
For 
\begin{equation}\label{pms}
t_0=q^{-1}\hat{t}_1\hat{t}_2,\quad t_1=\hat{t}_0\hat{t}_2, \quad t_2=\hat{t}_0\hat{t}_1 
\end{equation}
with $q,t$ and $\hat{t}_r$ in the parameter domain \eqref{pdomain}, the hyperbolic lattice Ruijsenaars-Schneider
Hamiltonian with Morse interaction $H$ \eqref{Haction}  constitutes a bounded self-adjoint operator in the Hilbert space $\ell^2(\rho+\Lambda,\Delta)$  
diagonalized by the Fourier transform $\boldsymbol{F}$ \eqref{ft1}, \eqref{ft2}:
 \begin{equation}\label{Hdiagonal}
H=\boldsymbol{F}^{-1}  \circ \hat{{E}} \circ\boldsymbol{F} .
 \end{equation}
\end{subequations}
\end{theorem}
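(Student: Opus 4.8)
The plan is to reduce the operator identity \eqref{Hdiagonal} to a single recurrence for the polynomials $P_\lambda$ in the partition index $\lambda$, to obtain that recurrence as a parameter degeneration of the known Pieri formula for the Macdonald--Koornwinder polynomials, and finally to read off boundedness and self-adjointness from the fact that $\boldsymbol{F}$ is unitary. First I would reduce to an eigenvalue equation. Since $\boldsymbol{F}$ is a Hilbert space isomorphism and $\hat{E}$ acts by multiplication by the real bounded function $\hat{E}(\xi)$, it suffices to show that each wave function $\psi_\xi$ \eqref{fk} is a formal eigenfunction of $H$, i.e. $(H\psi_\xi)(\rho+\lambda)=\hat{E}(\xi)\,P_\lambda(\xi)$, where $H$ acts on the lattice argument through \eqref{Haction}. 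Spelling this out by means of \eqref{Haction} turns it into the recurrence
\[
\hat{E}(\xi)\,P_\lambda(\xi)=\sum_{\substack{1\le j\le n\\ \lambda+e_j\in\Lambda}} v_j^+(\lambda)\bigl(P_{\lambda+e_j}(\xi)-P_\lambda(\xi)\bigr)+\sum_{\substack{1\le j\le n\\ \lambda-e_j\in\Lambda}} v_j^-(\lambda)\bigl(P_{\lambda-e_j}(\xi)-P_\lambda(\xi)\bigr),
\]
the restrictions $\lambda\pm e_j\in\Lambda$ being automatic because $v_j^\pm(\lambda)$ vanishes otherwise (as noted below \eqref{Haction}). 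A useful preliminary check is $\hat{E}(i\hat\rho)=0$: from $e^{\hat\rho_j}=t^{n-j}\hat{t}_0$ one gets $2\cos(i\hat\rho_j)=\cosh(\hat\rho_j)\cdot 2=t^{n-j}\hat{t}_0+t^{j-n}\hat{t}_0^{-1}$, so each summand in \eqref{E} vanishes at $\xi=i\hat\rho$, consistently with the normalization $P_\lambda(i\hat\rho)=1$ and with the difference form of the recurrence.

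The heart of the argument is to identify this recurrence with the expansion of $\hat{E}(\xi)\,P_\lambda(\xi)$ in the basis $\{P_\mu\}$. Writing $\hat{E}(\xi)=m_{e_1}(\xi)-\sum_j\bigl(t^{n-j}\hat{t}_0+t^{j-n}\hat{t}_0^{-1}\bigr)$ with $m_{e_1}(\xi)=\sum_j 2\cos\xi_j$, the required relation is exactly the Pieri-type formula for multiplication by $m_{e_1}$, which is the bispectral dual of the Koornwinder $q$-difference (eigenvalue) equation. For the Macdonald--Koornwinder polynomials this Pieri relation is available together with its explicit coefficients in \cite{die:properties} (see also \cite{koo:askey-wilson,mac:affine}); it expresses $m_{e_1}P_\lambda$ as a linear combination of $P_{\lambda+e_j}$, $P_{\lambda-e_j}$ and $P_\lambda$. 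I would pass to the limit in which one of the four Askey--Wilson parameters vanishes, so that the weight reduces to $\hat\Delta$ \eqref{plancherel} and the polynomials become the multivariate continuous dual $q$-Hahn polynomials, and then impose the parameter dictionary \eqref{pms} (with $t_3\equiv 1$).

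The main obstacle is the coefficient matching. Under \eqref{pms} I would verify that the two shift coefficients of the degenerate Pieri formula coincide with $v_j^+(\lambda)$ and $v_j^-(\lambda)$ from \eqref{Haction}, and that the diagonal Pieri coefficient equals $-\sum_j\bigl(v_j^+(\lambda)+v_j^-(\lambda)\bigr)$ up to the additive constant $-\sum_j\bigl(t^{n-j}\hat{t}_0+t^{j-n}\hat{t}_0^{-1}\bigr)$ carried by $\hat{E}(\xi)$. The normalization of the leading coefficient $c_{\lambda,\lambda}$ in \eqref{qH2} is precisely what produces the clean difference form $v_j^\pm(\lambda)\bigl(P_{\lambda\pm e_j}-P_\lambda\bigr)$ rather than shift terms with extra factors. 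This is where I expect the bulk of the computation, and the chief risk of bookkeeping errors, to lie: tracking the $q$-shifted factorials $(\,\cdot\,)_{\lambda_j}$ and $(\,\cdot\,)_{\lambda_j-\lambda_k}$ through the degeneration and confirming the $\hat{t}_r\leftrightarrow t_r$ substitutions of \eqref{pms}. I note already that \eqref{pms} yields $\sqrt{qt_0/(t_1t_2)}=|\hat{t}_0|^{-1}$ and $\sqrt{t_1t_2/(qt_0)}=|\hat{t}_0|$, so the radicands become perfect squares and the coefficients $v_j^\pm(\lambda)$ are real, which is needed in the final step.

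Once the eigenvalue equation of the first paragraph holds on the dense subspace of finitely supported lattice functions, the intertwining $\boldsymbol{F}\circ H=\hat{E}\circ\boldsymbol{F}$ follows there, using either the formal symmetry of $H$ with respect to $\langle\cdot,\cdot\rangle_{\Delta}$ (the detailed-balance identity $v_j^+(\lambda)\Delta_\lambda=v_j^-(\lambda+e_j)\Delta_{\lambda+e_j}$, a direct check from \eqref{qH-ortho-b}) or a direct rearrangement of the lattice sum in $\langle Hf,\psi_\xi\rangle_{\Delta}$. Because each $\lvert 2\cos\xi_j\rvert\le 2$, the function $\hat{E}(\xi)$ is real and bounded, so $\hat{E}$ is a bounded self-adjoint multiplication operator; as $\boldsymbol{F}$ is unitary, $\boldsymbol{F}^{-1}\circ\hat{E}\circ\boldsymbol{F}$ is bounded and self-adjoint and agrees with $H$ on a dense domain, whence $H$ extends to it and \eqref{Hdiagonal} holds. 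Everything outside the Pieri identity is thus either soft functional analysis or a normalization check, and the decisive work is the degeneration and coefficient matching of the middle two paragraphs.
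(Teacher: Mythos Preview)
Your proposal is correct and follows essentially the same approach as the paper: both reduce \eqref{Hdiagonal} to the eigenvalue equation $H\psi_\xi=\hat{E}(\xi)\psi_\xi$ and identify this as the continuous dual $q$-Hahn degeneration of the Pieri recurrence for the Macdonald--Koornwinder polynomials from \cite{die:properties}. The paper's proof is in fact terser than yours, simply citing Eqs.~(6.4), (6.5) and Section~6.1 of \cite{die:properties} for the Pieri formula without spelling out the functional-analytic wrap-up or the coefficient matching that you (rightly) flag as the main computational burden.
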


\begin{proof}
It suffices to verify that the Fourier kernel $\psi_\xi$ \eqref{fk} satisfies
the eigenvalue equation $H\psi_\xi=\hat{E}(\xi)\psi_\xi$, or more explicitly that:
\begin{align*}
 &\sum_{\substack{1\leq j\leq n\\ \lambda +e_j\in\Lambda}} v_j^+(\lambda) \bigl( \psi_\xi (\rho+\lambda +e_j)-\psi_\xi (\rho+\lambda )\bigr) + \\
 &\sum_{\substack{1\leq j\leq n\\ \lambda -e_j\in\Lambda}}v_j^-(\lambda)\bigr( \psi_\xi (\rho+\lambda -e_j)-\psi_\xi (\rho+\lambda )\bigr) =\hat{E}(\xi)\psi_\xi (\rho +\lambda) . \nonumber
\end{align*}
This eigenvalue equation amounts to the continuous dual $q$-Hahn reduction of
the Pieri recurrence formula for the Macdonald-Koornwinder polynomials corresponding to Eqs. (6.4), (6.5) and Section 6.1 of \cite{die:properties}.
\end{proof}
It is immediate from Theorem \ref{diagonal:thm} that the hyperbolic lattice Ruijsenaars-Schneider
Hamiltonian with Morse interaction $H$ \eqref{Haction} 
has purely absolutely continuous spectrum in  $\ell^2(\rho+\Lambda,\Delta)$, with
the wave functions $\psi_\xi$, $\xi\in \mathbb{A}$ in Eq. \eqref{fk} constituting an orthogonal basis of (generalized) eigenfunctions.

\begin{remark}
For $\hat{t}_2\to 0$ the lattice Hamiltonian $H$ \eqref{Hdiagonal} becomes of the form
\begin{subequations}
\begin{align}\
H=&  \sum_{j=1}^n  \Biggl(  \ \hat{t}_0^{-1} (1-\hat{t}_0\hat{t}_1q^{x_j})\Bigl(\prod_{\substack{1\leq k\leq n \\ k\neq j}} \frac{t^{-1}-q^{x_j-x_k}}{1-q^{x_j-x_k}} \Bigr)   T_j \\
 +& \ \hat{t}_0 (1-q^{x_j})\Bigl(\prod_{\substack{1\leq k\leq n \\ k\neq j}} \frac{t-q^{x_j-x_k}}{1-q^{x_j-x_k}} \Bigr)   T_j^{-1}  +(\hat{t}_0+\hat{t}_1)q^{x_j} \Biggr) -\varepsilon_0 ,\nonumber
\end{align}
with $x=\rho+\lambda$ and
\begin{equation}\label{E0}
\varepsilon_0:=\sum_{j=1}^n  (\hat{t}_0t^{n-j}+\hat{t}_0^{-1}t^{j-n}) .
\end{equation}
\end{subequations}
Indeed, this  readily follows from Eqs. \eqref{H}, \eqref{pms} 
with the aid of the elementary polynomial identity (cf. Example 2. (a) of \cite[Ch. VI.3]{mac:symmetric})
$$
\sum_{j=1}^n (1+z_j)\prod_{\substack{1\leq k\leq n \\ k\neq j}} \frac{t-z_j/z_k}{1-z_j/z_k} =\sum_{j=1}^n (z_j+t^{n-j}) .
$$
\end{remark}

\section{Scattering}\label{sec4}
In this section we rely on results from \cite{die:scattering}, permitting to describe briefly how the $n$-particle scattering operator for the hyperbolic quantum Ruijsenaars-Schneider system on the lattice computed by Ruijsenaars
\cite{rui:factorized} gets modified due to the presence of the external Morse interactions. 
Specifically, the scattering process of the present model with Morse terms turns out to be governed by an
$n$-particle scattering matrix $\hat{ {\mathcal S}} (\xi)$ that factorizes in two-particle and one-particle matrices:
\begin{subequations}
\begin{equation}
\hat{ {\mathcal S}} (\xi)
 := \prod_{1\leq j<k\leq n} s(\xi_j-\xi_k)s(\xi_j+\xi_k)\prod_{1\leq j\leq n} s_0(\xi_j) ,
\end{equation}
with
\begin{equation}
s(x):=\frac{(qe^{ix},te^{-ix})_\infty }{(qe^{-ix},te^{ix})_\infty }\quad\text{and}\quad  
s_0(x):=
\frac{(qe^{2ix})_\infty }{(qe^{-2ix})_\infty }\prod_{0\leq r\leq 2}\frac{(\hat{t}_re^{-ix})_\infty }{(\hat{t}_re^{ix})_\infty },
\end{equation}
\end{subequations}
which compares to Ruijsenaars' scattering matrix $ \prod_{1\leq j<k\leq n} s(\xi_j-\xi_k)$  for the corresponding model without Morse interactions \cite{rui:factorized}.

To substantiate further some additional notation is needed. Let us denote by $\mathcal{H}_0$ the self-adjoint discrete Laplacian in $\ell^2(\Lambda)$ of the form
\begin{equation*}
(\mathcal{H}_0 f)(\lambda)
:=
\sum_{\substack{1\leq j \leq n\\ \lambda+e_j\in\Lambda}} f(\lambda+e_j)
+\sum_{\substack{1\leq j \leq n\\ \lambda-e_j\in\Lambda}}   f(\lambda-e_j)
\qquad (f\in\ell^2(\Lambda)),
\end{equation*}
and let 
\begin{equation}\label{pfH}
\mathcal{H}:= \boldsymbol{\Delta}^{1/2} (H+\varepsilon_0) \boldsymbol{\Delta}^{-1/2},
\end{equation}
with $H$  and $\varepsilon_0$ taken from \eqref{Haction} and \eqref{E0}, respectively. Here the operator
$\boldsymbol{\Delta}^{1/2}:\ell^2(\rho+\Lambda,\Delta)\to \ell^2(\Lambda)$  refers to the Hilbert space isomorphism
\begin{equation}\label{embed1}
(\boldsymbol{\Delta}^{1/2}f)(\lambda):=\Delta^{1/2}_\lambda f(\rho+\lambda)
\qquad (f\in \ell^2(\rho+\Lambda,\Delta) )
\end{equation}
(with $\boldsymbol{\Delta}^{-1/2}:=(\boldsymbol{\Delta}^{1/2})^{-1}$). Then (by Theorem \ref{diagonal:thm})
\begin{equation}\label{F}
\mathcal{H}=\boldsymbol{\mathcal{F}}^{-1}( \hat{E} +\varepsilon_0)\boldsymbol{\mathcal{F}}
\quad
\text{with} 
\quad
\boldsymbol{\mathcal{F}}:= \boldsymbol{\hat{\Delta}}^{1/2} \boldsymbol{F} \boldsymbol{\Delta}^{-1/2} ,
\end{equation}
where $\boldsymbol{\hat{\Delta}}^{1/2}:L^2(\mathbb{A},\hat{\Delta}\text{d}\xi)\to L^2(\mathbb{A})$ denotes the Hilbert space isomorphism\begin{equation}
(\boldsymbol{\hat{\Delta}}^{1/2}\hat{f})(\xi):= \hat{\Delta}^{1/2}(\xi)\hat{f}(\xi)
\qquad (\hat{f}\in L^2(\mathbb{A},\hat{\Delta}\text{d}\xi))
\end{equation}
(and $\hat{E}$ \eqref{E} is now regarded as a self-adjoint bounded multiplication operator in $L^2(\mathbb{A})$).
Furthermore, one has that
$$
\mathcal{H}_0=\boldsymbol{\mathcal{F}}_0^{-1}( \hat{E}+\varepsilon_0)\boldsymbol{\mathcal{F}}_0,
$$
where $\boldsymbol{\mathcal{F}}_0:\ell^2(\Lambda)\to L^2(\mathbb{A})$ denotes the Fourier isomorphism
recovered from $\mathcal{F}$ in the limit $q,t\to 0$,
$\hat{t}_r\to 0$ ($r=0,1,2$). Specifically, this amounts to the Fourier transform
\begin{subequations}
\begin{equation}
(\boldsymbol{\mathcal{F}}_0{f})(\xi)=\sum_{\lambda\in\Lambda}f(\lambda)
\overline{\chi_\xi (\lambda)}
\end{equation}
(${f}\in \ell^2(\Lambda)$) 
with the inversion formula
\begin{equation}
(\boldsymbol{\mathcal{F}}_0^{-1}\hat{f})(\lambda) =
\int_{\mathbb{A}} \hat{f} (\xi) \chi_\xi (\lambda)\text{d}\xi
\end{equation}
\end{subequations}
($\hat{f}\in L^2(\mathbb{A})$) associated with the anti-invariant Fourier kernel
$$
 \chi_\xi (\lambda):=\frac{1}{(2\pi )^{n/2}\, i^{n^2}} \sum_{w\in W} \text{sign}(w) e^{i\langle w(\rho_0 +\lambda ) ,\xi\rangle} ,
$$
where $\text{sign}(w)=\epsilon_1\cdots\epsilon_n\text{sign}(\sigma)$ for $w=(\sigma,\epsilon)\in W=S_n\ltimes \{ 1,-1\}^n$ and 
$\rho_0= (n,n-1,\ldots,2,1)$.

Let $C_0(\mathbb{A}_{\text{reg}})$ be the dense subspace of $L^2(\mathbb{A})$ consisting of smooth test functions with compact support in
 the open dense subset $\mathbb{A}_{\text{reg}}\subset\mathbb{A}$ on which the components of the gradient
$$\nabla \hat{E}(\xi)=(-2\sin(\xi_1),\ldots,-2\sin(\xi_n)),\quad \xi\in\mathbb{A}$$ do not vanish and are all distinct in absolute value.
 We define the following unitary multiplication operator $ \hat{\mathcal S} : L^2(\mathbb{A},\text{d}\xi)\to  L^2(\mathbb{A},\text{d}\xi)$ via its restriction to $C_0(\mathbb{A}_{\text{reg}})$:
 \begin{equation}
 ( \hat{\mathcal S}\hat{f})(\xi):=  \hat{\mathcal S}(w_\xi \xi )\hat{f}(\xi)\qquad
 (\hat{f}\in C_0(\mathbb{A}_{\text{reg}}),
  \end{equation}
 where $w_\xi\in W$ for $\xi\in \mathbb{A}_{\text{reg}}$ is the signed permutation such that the components of $w_\xi \nabla \hat{E}(\xi)$
 are all positive and reordered from large to small.

Theorem~4.2 and Corollary~4.3 of
Ref. \cite{die:scattering} now provide explicit formulas for the wave operators and scattering operator comparing the large-times asymptotics of the interacting particle dynamics
 $e^{i\mathcal{H}t}$ relative to the Laplacian's reference dynamics $e^{i\mathcal{H}_0t}$ as a 
 continuous dual $q$-Hahn reduction of  \cite[Thm. 6.7]{die:scattering}.
 
 \begin{theorem}[Wave and Scattering Operators]\label{scattering:thm}
  The operator limits
\begin{equation*}
\Omega^{\pm} :=s-\lim_{t\to \pm \infty}  e^{i t  \mathcal{H}}e^{-it \mathcal{H}_{0}}
\end{equation*}
converge in the strong $\ell^2(\Lambda)$-norm topology and the corresponding wave operators $\Omega^\pm$ intertwining the interacting dynamics $e^{i\mathcal{H}t}$  with the discrete Laplacian's dynamics $e^{i\mathcal{H}_0t}$
are given by unitary operators in $\ell^2(\Lambda )$ of the form
\begin{equation*}
\Omega^\pm = \boldsymbol{\mathcal{F}}^{-1} \circ \hat{\mathcal S}^{\mp 1/2}  \circ \boldsymbol{\mathcal{F}}_0,
\end{equation*}
where the branches of the square roots are to be chosen such that
\begin{equation*}
s(x)^{1/2}=\frac{(qe^{ix})_\infty }{|(qe^{ix})_\infty | } \frac{|(te^{ix})_\infty | }{(te^{ix})_\infty }
\quad\text{and}\quad
 s_0(x)^{1/2}=\frac{(qe^{2ix})_\infty }{|(qe^{2ix})_\infty | }\prod_{0\leq r\leq 2}\frac{|(\hat{t}_re^{ix})_\infty |}{(\hat{t}_re^{ix})_\infty }  .
 \end{equation*}
The scattering operator relating the large-times asymptotics of  $e^{i\mathcal{H}t}$  for $t\to - \infty$ and $t\to +\infty$ is thus given by the unitary operator
\begin{equation*}
\mathcal{S}:=(\Omega^+)^{-1} \Omega^- =  \boldsymbol{\mathcal{F}}_0^{-1}  \circ \hat{\mathcal S} \circ  \boldsymbol{\mathcal{F}}_0 .
\end{equation*}
\end{theorem}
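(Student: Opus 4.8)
The plan is to deduce the statement from the abstract scattering theory of \cite{die:scattering} by specializing it to the continuous dual $q$-Hahn regime governed by the parameter identifications \eqref{pms}. The starting observation is that the conjugated and shifted Hamiltonian $\mathcal{H}$ \eqref{pfH} and the reference Laplacian $\mathcal{H}_0$ are \emph{both} bounded self-adjoint operators on $\ell^2(\Lambda)$ carrying the \emph{same} dispersion relation: by \eqref{F} one has $\mathcal{H}=\boldsymbol{\mathcal{F}}^{-1}(\hat{E}+\varepsilon_0)\boldsymbol{\mathcal{F}}$ and $\mathcal{H}_0=\boldsymbol{\mathcal{F}}_0^{-1}(\hat{E}+\varepsilon_0)\boldsymbol{\mathcal{F}}_0$ with one common multiplication operator $\hat{E}+\varepsilon_0$. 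On the spectral side the two dynamics thus coincide, and the wave operators measure only the relative asymptotic phase between the interacting Fourier kernel $\psi_\xi$ and the anti-invariant plane-wave kernel $\chi_\xi$. This is exactly the configuration to which Theorem~4.2 and Corollary~4.3 of \cite{die:scattering} apply.

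The substance of the argument lies in the large-$\lambda$ asymptotics of the normalized kernel $\hat{\Delta}^{1/2}(\xi)\Delta_\lambda^{1/2}\psi_\xi(\rho+\lambda)=\hat{\Delta}^{1/2}(\xi)\Delta_\lambda^{1/2}P_\lambda(\xi)$ of $\boldsymbol{\mathcal{F}}$. First I would establish, uniformly for $\xi$ on compacta of $\mathbb{A}_{\text{reg}}$, a Harish-Chandra type expansion expressing this kernel as a superposition of the free kernels $\chi_{w\xi}$ over the signed permutations $w\in W$, with coefficients built from a $c$-function $\hat{\mathbf{c}}(\xi)$. The $n$-particle scattering amplitude $\hat{\mathcal{S}}(\xi)$ is then recovered from the ratios of these coefficients and must be shown to factorize into the two-particle amplitudes $s(\xi_j\pm\xi_k)$ and the one-particle amplitudes $s_0(\xi_j)$. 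Concretely, this is the continuous dual $q$-Hahn degeneration of the Macdonald-Koornwinder asymptotics underlying \cite[Thm.~6.7]{die:scattering}: one sends the fourth Askey-Wilson parameter to zero in the degenerate $c$-function and checks that the surviving $q$-Pochhammer ratios reproduce precisely $s$ and $s_0$, the $\hat{t}_r$ entering through $s_0$ exactly in the combinations dictated by \eqref{plancherel}.

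Granting these asymptotics, I would invoke the stationary-phase mechanism of \cite[Thm.~4.2]{die:scattering}: for a test function $\hat{f}\in C_0(\mathbb{A}_{\text{reg}})$ the freely evolved wave packet $e^{-it\mathcal{H}_0}\boldsymbol{\mathcal{F}}_0^{-1}\hat{f}$ concentrates, as $t\to\pm\infty$, along the classical velocity $\nabla\hat{E}(\xi)$, and its overlap with the interacting evolution retains only the single dominant Weyl-chamber term singled out by the reordering $w_\xi$; the coefficient of that term is exactly $\hat{\mathcal{S}}(w_\xi\xi)^{\mp1/2}$. This is precisely where the regularity built into $\mathbb{A}_{\text{reg}}$ is used, the components of $\nabla\hat{E}$ being nonzero (so the packets escape to infinity) and pairwise distinct in absolute value (so the channels separate). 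One obtains strong convergence of $\Omega^{\pm}$ on the dense subspace $C_0(\mathbb{A}_{\text{reg}})$ together with the closed form $\Omega^{\pm}=\boldsymbol{\mathcal{F}}^{-1}\hat{\mathcal{S}}^{\mp1/2}\boldsymbol{\mathcal{F}}_0$; unitarity is then automatic, since $\hat{\mathcal{S}}^{\mp1/2}$ is a unimodular multiplication operator while $\boldsymbol{\mathcal{F}}$ and $\boldsymbol{\mathcal{F}}_0$ are Hilbert space isomorphisms. The stated branch conventions for $s(x)^{1/2}$ and $s_0(x)^{1/2}$ are forced by matching this dominant phase, each factor being of the unimodular form $z/|z|$.

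The scattering operator then drops out by a purely algebraic computation. Because the multiplication operators $\hat{\mathcal{S}}^{1/2}$ commute and the interacting transforms cancel,
\begin{equation*}
\mathcal{S}=(\Omega^+)^{-1}\Omega^-=\boldsymbol{\mathcal{F}}_0^{-1}\hat{\mathcal{S}}^{1/2}\boldsymbol{\mathcal{F}}\,\boldsymbol{\mathcal{F}}^{-1}\hat{\mathcal{S}}^{1/2}\boldsymbol{\mathcal{F}}_0=\boldsymbol{\mathcal{F}}_0^{-1}\hat{\mathcal{S}}\boldsymbol{\mathcal{F}}_0,
\end{equation*}
as claimed. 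I expect the main obstacle to be the second paragraph: securing the eigenfunction asymptotics with remainder estimates uniform in $\xi$ across compacta of $\mathbb{A}_{\text{reg}}$ — convergence of the Harish-Chandra series and control of its tail — which is what powers the stationary-phase step. In the present setting this analytic labor is imported wholesale from \cite{die:scattering}; the genuine verification is that the continuous dual $q$-Hahn limit stays within the hypotheses of those theorems and that its degenerate $c$-function yields exactly the factorized amplitudes $s$ and $s_0$.
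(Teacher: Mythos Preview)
Your proposal is correct and follows the same route as the paper: the theorem is obtained as the continuous dual $q$-Hahn specialization of \cite[Thm.~6.7]{die:scattering}, invoking the abstract wave-operator machinery of \cite[Thm.~4.2, Cor.~4.3]{die:scattering} once the interacting and free Hamiltonians have been placed in the common spectral form $\boldsymbol{\mathcal{F}}^{-1}(\hat{E}+\varepsilon_0)\boldsymbol{\mathcal{F}}$ and $\boldsymbol{\mathcal{F}}_0^{-1}(\hat{E}+\varepsilon_0)\boldsymbol{\mathcal{F}}_0$. Your account in fact spells out more of the underlying stationary-phase and $c$-function mechanism than the paper does, but the approach is identical.
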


\section{Bispectral dual system}\label{sec5}
The bispectral dual in the sense of Duistermaat and Gr\"unbaum \cite{dui-gru:differential,gru:bispectral} of the 
hyperbolic quantum Ruijsenaars-Schneider system on the lattice is given  by the trigonometric Ruijsenaars-Macdonald $q$-difference operators 
\cite{rui:complete,mac:symmetric}. This bispectral duality is a quantum manifestation of the duality between the classical Ruijsenaars-Schneider systems with hyperbolic/trigonometric dependence on the position/momentum variables and vice versa
\cite{rui:action}, which (at the classical level) states that the respective action-angle transforms linearizing the two systems under consideration are inverses of each other. As a degeneration of the Macdonald-Koornwinder $q$-difference operator \cite[Eq. (5.4)]{koo:askey-wilson}, we immediately arrive at a bispectral dual Hamiltonian for our hyperbolic quantum Ruijsenaars-Schneider system with Morse term.

Indeed, the continuous dual $q$-Hahn reduction of the $q$-difference equation  satisfied by the Macdonald-Koornwinder polynomials  \cite[Thm. 5.4]{koo:askey-wilson}  reads
\begin{subequations}
\begin{equation}\label{hd1}
\hat{H}P_\lambda = E_\lambda P_\lambda \quad\text{with}\quad E_\lambda= \sum_{j=1}^n t^{j-1}(q^{-\lambda_j}-1)  \qquad (\lambda\in\Lambda),
\end{equation}
where
\begin{equation}\label{hd2}
\hat{H}=\sum_{j=1}^n \left( \hat{v}_j(\xi)(\hat{T}_{j,q}-1)+ \hat{v}_{j}(-\xi)\hat{T}_{j,q}^{-1}-1)\right) ,
\end{equation}
and
\begin{equation}\label{hd3}
\hat{v}_{ j}(\xi)=
\frac{\prod_{0\leq r\leq 2} (1-\hat{t}_re^{i\xi_j}) }{(1-e^{2i\xi_j})  (1-q e^{2i\xi_j}) }
\prod_{\substack{1\leq k\leq n\\k\neq j}}  \frac{1-te^{i(\xi_j+\xi_k)}}{1-e^{i(\xi_j+\xi_k)}}  \frac{
1-te^{i(\xi_j-\xi_k)}}{1-e^{i(\xi_j-\xi_k)}} .
\end{equation}
\end{subequations}
Here $\hat{T}_{j,q}$ acts on trigonometric (Laurent) polynomials $\hat{p}(e^{i\xi_1},\ldots ,e^{i\xi_n})$  by a $q$-shift of the $j$th variable:
\begin{equation*}
(\hat{T}_{j,q}\hat{p})(e^{i\xi_1},\ldots ,e^{i\xi_n}):=\hat{p}(e^{i\xi_1},\ldots,e^{i\xi_{j-1}},qe^{i\xi_j},e^{i\xi_{j+1}},\ldots,e^{i\xi_n}) .
\end{equation*}

In other words, the bispectral dual Hamiltonian $\hat{H}$  \eqref{hd2},\eqref{hd3} constitutes a  nonnegative unbounded self-adjoint operator with purely discrete spectrum in
$L^2(\mathbb{A},\hat{\Delta}\text{d}\xi)$ that is
diagonalized by the (inverse) Fourier transform $\boldsymbol{F}$ \eqref{ft1}, \eqref{ft2}:
 \begin{equation}\label{Hddiagonal}
\hat{H}=\boldsymbol{F}  \circ {{E}} \circ\boldsymbol{F}^{-1},
 \end{equation}
where $E$ denotes the self-adjoint multiplication operator in $ \ell^2(\rho+\Lambda,\Delta)$ of the form
$
(E f)(\rho+\lambda):=E_\lambda f(\rho+\lambda) 
$
 (for $\lambda\in\Lambda$ and $f \in \ell^2(\rho+\Lambda,\Delta)$ with $\langle Ef,Ef\rangle_{\Delta}<\infty$).

\section{Quantum integrability}\label{sec6}
In this final section we provide explicit formulas for a complete system of commuting quantum integrals for the 
hyperbolic quantum Ruijsenaars-Schneider Hamiltonian with Morse term on the lattice $H$ \eqref{Haction} and for its bispectral dual Hamiltonian $\hat{H}$ \eqref{hd2}, \eqref{hd3}.
This confirms the quantum integrability of both Hamiltonians in the present Hilbert space set-up.

\subsection{Hamiltonian}
The quantum integrals for the
hyperbolic Ruijsenaars-Schneider Hamiltonian with Morse term are given by commuting difference operators $H_1,\ldots ,H_n$ that are defined via their action on $f\in \ell^2(\rho+\Lambda ,\Delta)$
(cf. \cite[Eqs. (2.20)--(2.23)]{die:difference}):
\begin{eqnarray}\label{Hint}
\lefteqn{ (H_l f)(\rho +\lambda ) := } && \\
 && \sum_{\substack{J_+, J_-\subset \{1,\ldots,n\}   \\ J_+\cap J_-=\emptyset, \, |J_+|+| J_- | \leq l \\ \lambda + e_{J_+}-e_{J_-}\in\Lambda }     } 
U_{J_+^c\cap J_-^c, l-|J_+|-|J_-|}(\lambda) V_{J_+,J_-}(\lambda )  f(\rho+\lambda+e_{J_+}-e_{J_-})  \nonumber
 \end{eqnarray}
($\lambda\in\Lambda$, $l=1,\ldots ,n$), where $e_J:=\sum_{j\in J} e_j$ for $J\subset \{1,\ldots ,n\}$, $J^c:=\{ 1,\ldots ,n\}\setminus J$ and
 \begin{align*}
 V_{J_+,J_-}(\lambda ) &= t^{-\frac{1}{2}|J_+|(|J_+|-1)+\frac{1}{2}|J_-|(|J_-|-1)}\\
 & \times\prod_{j\in J_+}  \sqrt{\frac{qt_0}{t_1t_2} } (1-{t}_1 t^{n-j}q^{\lambda_j})(1-t_2t^{n-j}q^{\lambda_j}) \\
&\times \prod_{j\in J_-} \sqrt{\frac{t_1t_2}{qt_0}} (1- t_0t^{n-j}q^{\lambda_j})(1-t^{n-j}q^{\lambda_j}) 
 \\
&\times \prod_{\substack{j\in J_+\\k\in J_-}}
\Bigl(\frac{1-t^{1+k-j}q^{\lambda_j-\lambda_k}}{1-t^{k-j}q^{\lambda_j-\lambda_k}} \Bigr)
\Bigl( \frac{t^{-1}-t^{k-j}q^{\lambda_j-\lambda_k+1}}{1-t^{k-j}q^{\lambda_j-\lambda_k+1}}\Bigr)
 \\
&\times \prod_{\substack{j\in J_+\\ k\not\in J_+\cup J_-}} 
\frac{t^{-1}-t^{k-j}q^{\lambda_j-\lambda_k}}{1-t^{k-j}q^{\lambda_j-\lambda_k}} 
\prod_{\substack{j\in J_-\\ k\not\in J_+\cup J_-}} 
 \frac{t-t^{k-j}q^{\lambda_j-\lambda_k}}{1-t^{k-j}q^{\lambda_j-\lambda_k}} ,
  \end{align*} 
 \begin{align*}
U_{K, p}(\lambda)  = (-1)^p&\times \\
\sum_{\substack{I_+,I_-\subset K\\ I_+\cap I_-=\emptyset,\, |I_+|+|I_-|= p }} &
\Biggl(
\prod_{j\in I_+}  \sqrt{\frac{qt_0}{t_1t_2} } (1-{t}_1 t^{n-j}q^{\lambda_j})(1-t_2t^{n-j}q^{\lambda_j}) \\
&\times \prod_{j\in I_-} \sqrt{\frac{t_1t_2}{qt_0}} (1- t_0t^{n-j}q^{\lambda_j})(1-t^{n-j}q^{\lambda_j}) 
 \\
&\times  \prod_{\substack{j\in I_+\\k\in I_-}}
\Bigl(\frac{1-t^{1+k-j}q^{\lambda_j-\lambda_k}}{1-t^{k-j}q^{\lambda_j-\lambda_k}} \Bigr)
\Bigl( \frac{1-t^{-1+k-j}q^{\lambda_j-\lambda_k+1}}{1-t^{k-j}q^{\lambda_j-\lambda_k+1}}\Bigr)\\
&\times \prod_{\substack{j\in I_+\\ k\in K\setminus (I _+\cup I_-)}} 
\frac{t^{-1}-t^{k-j}q^{\lambda_j-\lambda_k}}{1-t^{k-j}q^{\lambda_j-\lambda_k}} 
\prod_{\substack{j\in I_-\\ k\in K\setminus (I _+\cup I_-)}} 
 \frac{t-t^{k-j}q^{\lambda_j-\lambda_k}}{1-t^{k-j}q^{\lambda_j-\lambda_k}}\Biggr) .
  \end{align*}
For $l=1$ the action of $H_l$ \eqref{Hint} is seen to reduce to that of  $H$ \eqref{Haction}.
The diagonalization in Theorem \ref{diagonal:thm} generalizes to these higher commuting quantum integrals as follows.

\begin{theorem}\label{int:thm}
For parameters of the form as in Theorem \ref{diagonal:thm}, the difference operators $H_1,\ldots,H_n$ \eqref{Hint}  constitute  bounded commuting self-adjoint operators in the Hilbert space $\ell^2(\rho+\Lambda,\Delta)$  
that are simultaneously diagonalized by the Fourier transform $\boldsymbol{F}$ \eqref{ft1}, \eqref{ft2}:
\begin{subequations}
 \begin{equation}\label{Hint-diagonal}
H_l=\boldsymbol{F}^{-1}  \circ \hat{{E}}_l  \circ\boldsymbol{F} \qquad (l=1,\ldots ,n),
 \end{equation}
where   $\hat{{E}}_l$ denotes the bounded real multiplication operator acting on
$\hat{f}\in L^2(\mathbb{A},\hat{\Delta}\text{d}\xi)$ by $(\hat{{E}_l}\hat{f})(\xi):=\hat{E}_l(\xi) \hat{f}(\xi)$ with
\begin{align}
&\hat{E}_l (\xi):=\\
&\sum_{1\leq j_1<\cdots < j_l\leq n} (   2\cos(\xi_{j_1}) -t^{j_1-1}\hat{t}_0-t^{-(j_1-1)}\hat{t}_0^{-1} )\cdots (   2\cos(\xi_{j_l}) -t^{j_l-l}\hat{t}_0-t^{-(j_l-l)}\hat{t}_0^{-1} )  .\nonumber
\end{align}
\end{subequations}
\end{theorem}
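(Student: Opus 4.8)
The plan is to follow exactly the strategy of the proof of Theorem \ref{diagonal:thm} and to reduce all of the asserted operator-theoretic properties to a single family of algebraic identities, namely the simultaneous eigenvalue equations
\[
H_l\,\psi_\xi=\hat{E}_l(\xi)\,\psi_\xi\qquad(\xi\in\mathbb{A},\ l=1,\ldots,n)
\]
for the Fourier kernel $\psi_\xi$ of \eqref{fk}. Granting these, everything else is formal. Each $\hat{E}_l(\xi)$ is a bounded, real-valued, continuous function on the closure of $\mathbb{A}$, so $\hat{E}_l$ is a bounded self-adjoint multiplication operator on $L^2(\mathbb{A},\hat{\Delta}\,\text{d}\xi)$. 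Applying the difference operator $H_l$ (acting in $\lambda$) under the integral sign of the inversion formula \eqref{ft2} and invoking the eigenvalue equation gives $H_l\boldsymbol{F}^{-1}\hat{f}=\boldsymbol{F}^{-1}\hat{E}_l\hat{f}$, that is $H_l=\boldsymbol{F}^{-1}\circ\hat{E}_l\circ\boldsymbol{F}$. Since $\boldsymbol{F}$ is a (unitary) Hilbert space isomorphism, $H_l$ then inherits boundedness and self-adjointness by unitary equivalence, and the commutativity of $H_1,\ldots,H_n$ is immediate from the commutativity of the multiplication operators $\hat{E}_1,\ldots,\hat{E}_n$. (Boundedness of $H_l$ as defined directly by \eqref{Hint} can alternatively be read off from the uniform boundedness on $\Lambda$ of the coefficients $V_{J_+,J_-}(\lambda)$ and $U_{K,p}(\lambda)$, the relevant denominators $1-t^{k-j}q^{\lambda_j-\lambda_k}$ staying bounded away from zero for $j<k$ because $t,q\in(0,1)$, exactly as in the $l=1$ case treated after \eqref{Haction}.)

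The substance of the proof is therefore to establish the eigenvalue equations, and these are precisely the continuous dual $q$-Hahn reductions of the higher-order Pieri recurrence formulas for the Macdonald--Koornwinder polynomials. Concretely, I would start from the complete system of Pieri formulas in \cite{die:properties}, the ones governing multiplication of the Macdonald--Koornwinder polynomials by the $l$-th elementary-symmetric combination of the one-particle symbols, whose $l=1$ instance specializes to the recurrence corresponding to Eqs. (6.4), (6.5) and Section 6.1 cited in the proof of Theorem \ref{diagonal:thm}, and then perform the degeneration in which one of the four Askey--Wilson parameters is sent to zero. Under this reduction the Pieri difference operator should collapse to $H_l$ with the explicit coefficients $V_{J_+,J_-}$ and $U_{K,p}$ of \eqref{Hint}, and the associated Pieri eigenvalue should collapse to the elementary-symmetric expression $\hat{E}_l(\xi)$; that the $l=1$ specialization recovers $H$ and $\hat{E}$ (as already noted in the text) provides a normalization check.

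The main obstacle will be the bookkeeping of this parameter degeneration rather than any conceptual difficulty. One has to push the full $BC_n$ Macdonald--Koornwinder Pieri coefficients through the substitution \eqref{pms}, $t_0=q^{-1}\hat{t}_1\hat{t}_2$, $t_1=\hat{t}_0\hat{t}_2$, $t_2=\hat{t}_0\hat{t}_1$, together with $t_3=1$ and the vanishing of the fourth hatted parameter, and check term by term that the products over $J_+,J_-$ and the sign-alternating sums defining $U_{K,p}$ emerge with exactly the stated powers of $t$ and the prefactors $\sqrt{qt_0/(t_1t_2)}$ and $\sqrt{t_1t_2/(qt_0)}$. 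A secondary point requiring care is the verification that $H_l$ genuinely preserves the lattice: one must confirm that every summand with $\lambda+e_{J_+}-e_{J_-}\notin\Lambda$ carries a vanishing coefficient, generalizing the boundary cancellations recorded after \eqref{Haction} for the case $l=1$. This is what legitimizes restricting the formally defined operator \eqref{Hint} to $\ell^2(\rho+\Lambda,\Delta)$, and hence underpins the whole diagonalization \eqref{Hint-diagonal}.
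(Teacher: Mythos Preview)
Your proposal is correct and follows essentially the same route as the paper: the paper's proof reduces the theorem to the eigenvalue identities $H_l\psi_\xi=\hat{E}_l(\xi)\psi_\xi$ and identifies these as the continuous dual $q$-Hahn reduction of the Pieri recurrence for the Macdonald--Koornwinder polynomials in \cite[Thm.~6.1]{die:properties}, with the compact eigenvalue formula credited to \cite[Eq.~(5.1)]{kom-nou-shi:kernel} (cf.\ also \cite[Sec.~2.2]{die-ems:branching}). Your additional remarks on deducing boundedness, self-adjointness and commutativity by unitary equivalence, and on the boundary vanishing ensuring that $H_l$ preserves the lattice, are accurate elaborations that the paper leaves implicit.
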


\begin{proof}
The eigenvalue equation $H_l\psi_\xi=\hat{E}_l(\xi)\psi_\xi$ reads explicitly
\begin{align*}
 \sum_{\substack{J_+, J_-\subset \{1,\ldots,n\}   \\ J_+\cap J_-=\emptyset, \, |J_+|+| J_- | \leq l \\ \lambda + e_{J_+}-e_{J_-}\in\Lambda }     } 
U_{J_+^c\cap J_-^c, l-|J_+|-|J_-|}(\lambda) V_{J_+,J_-}(\lambda )  \psi_\xi (\rho+\lambda+e_{J_+}-e_{J_-})  &\\
=\hat{E}_l(\xi)\psi_\xi (\rho +\lambda) .&
\end{align*}
This eigenvalue identity corresponds to the continuous dual $q$-Hahn reduction of
the Pieri recurrence formula for the Macdonald-Koornwinder polynomials in
\cite[Thm. 6.1]{die:properties}, where we have expressed the eigenvalues $\hat{E}_l(\xi)$ in a compact form stemming from  \cite[Eq. (5.1)]{kom-nou-shi:kernel} (cf. also \cite[Sec. 2.2]{die-ems:branching}).
\end{proof}

\subsection{Bispectral dual Hamiltonian}
The continuous dual $q$-Hahn reduction of the system of higher $q$-difference equations for the Macdonald-Koornwinder polynomials in
 \cite[\text{Sec.}~5.1]{die:properties} reads
 \begin{subequations}
\begin{equation}
\hat{H}_l P_\lambda =E_{\lambda ,l} P_\lambda    \qquad (\lambda\in\Lambda,\, l=1,\ldots ,n),
\end{equation}
where
\begin{equation}\label{eigenvalues}
E_{\lambda ,l} :=t^{-l(l-1)/2} \sum_{1\leq j_1<\cdots <j_l\leq n}
 (t^{j_1-1}q^{-\lambda_{j_1}}-t^{n-j_1})\cdots  (t^{j_l-1}q^{-\lambda_{j_l}}-t^{n+l-j_l-1})
\end{equation}
(cf. \cite[Eq. (5.1)]{kom-nou-shi:kernel}), and
\begin{equation}\label{dual-int}
\hat{H}_l:= 
\sum_{\substack{ J\subset \{ 1,\ldots ,n\},\, 0\leq |J|\leq l \\ \epsilon_j\in\{ 1,-1\},j\in J}} \hat{U}_{J^c,l -|J|}\hat{V}_{\epsilon J} \hat{T}_{\epsilon J,q},
\end{equation}
with $\hat{T}_{\epsilon J,q}:=\prod_{j\in J} \hat{T}_{j,q}^{\epsilon_j}$ and
\begin{align*}
\hat{V}_{\epsilon J} = &
\prod_{j\in J}  \frac{\prod_{0\leq r\leq 2} (1-\hat{t}_re^{i\epsilon_j\xi_j}) }{(1-e^{2i\epsilon_j\xi_j})  (1-q e^{2i\epsilon_j\xi_j}) } \prod_{\substack{j\in J \\k\not\in J}} \frac{1-te^{i(\epsilon_j\xi_j+\xi_k)}}{1-e^{i(\epsilon_j\xi_j+\xi_k)}}\frac{1-te^{i(\epsilon_j\xi_j-\xi_k)}}{1-e^{i(\epsilon_j\xi_j-\xi_k)}}
 \\
& \times \prod_{\substack{j,k\in J \\ j<k}} \frac{1-te^{i(\epsilon_j\xi_j+\epsilon_k\xi_k)}}{1-e^{i(\epsilon_j\xi_j+\epsilon_k\xi_k)}}
\frac{1-tqe^{i(\epsilon_j\xi_j+\epsilon_j\xi_k)}}{1-qe^{i(\epsilon_j\xi_j+\epsilon_k\xi_k)}}  ,
\end{align*}
\begin{align*}
\hat{U}_{K,p} =  (-1)^p 
\sum_{\substack{ I \subset K,\,  | I | = p\\ \epsilon_j\in\{ 1,-1\},j\in I}} 
& \bigg( 
 \prod_{j\in I}  \frac{\prod_{0\leq r\leq 2} (1-\hat{t}_re^{i\epsilon_j\xi_j}) }{(1-e^{2i\epsilon_j\xi_j})  (1-q e^{2i\epsilon_j\xi_j}) } \\
&\times   \prod_{\substack{j\in I \\k\in K\setminus I}} \frac{1-te^{i(\epsilon_j\xi_j+\xi_k)}}{1-e^{i(\epsilon_j\xi_j+\xi_k)}}\frac{1-te^{i(\epsilon_j\xi_j-\xi_k)}}{1-e^{i(\epsilon_j\xi_j-\xi_k)}} \\
 & \times \prod_{\substack{j,k\in I \\ j<k}} \frac{1-te^{i(\epsilon_j\xi_j+\epsilon_k\xi_k)}}{1-e^{i(\epsilon_j\xi_j+\epsilon_k\xi_k)}}
\frac{t-q e^{i(\epsilon_j\xi_j+\epsilon_j\xi_k)}}{1-qe^{i(\epsilon_j\xi_j+\epsilon_k\xi_k)}}  \bigg) .
\end{align*}
 \end{subequations}
For $l=1$, this reproduces the  continuous dual $q$-Hahn reduction of the Macdonald-Koornwinder $q$-difference equation in Eqs. \eqref{hd1}--\eqref{hd3}.

The $q$-difference operators $\hat{H}_1,\ldots,\hat{H}_n$ extend the bispectral dual Hamiltonian $\hat{H}$ \eqref{hd2}--\eqref{hd3}
into a complete system of commuting quantum integrals that   are simultaneously diagonalized by the
multivariate continuous dual $q$-Hahn polynomials.
\begin{theorem}\label{dual:thm} 
For parameter values in the domain \eqref{pdomain}, the $q$-difference operators $\hat{H}_1,\ldots,\hat{H}_n$ constitute {\em nonnegative} unbounded self-adjoint operators with purely discrete spectra in
$L^2(\mathbb{A},\hat{\Delta}\text{d}\xi)$ that are simultaneously diagonalized by the (inverse) Fourier transform $\boldsymbol{F}$ \eqref{ft1}, \eqref{ft2}:
\begin{equation}
\hat{H}_l=\boldsymbol{F}  \circ {{E}}_l \circ\boldsymbol{F}^{-1},\qquad l=1,\ldots ,n,
 \end{equation}
where $E_l$ denotes the self-adjoint multiplication operator in $ \ell^2(\rho+\Lambda,\Delta)$ given by
$(E_l f)(\rho+\lambda):=E_{\lambda ,l} f(\rho+\lambda)$
(on the domain of $f \in \ell^2(\rho+\Lambda,\Delta)$ such that $\langle E_l f,E_l f\rangle_{\Delta}<\infty$).
\end{theorem}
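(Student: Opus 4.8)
The plan is to promote the eigenvalue identities $\hat{H}_l P_\lambda = E_{\lambda,l} P_\lambda$ displayed above (the continuous dual $q$-Hahn degeneration of the higher-order Macdonald-Koornwinder $q$-difference equations from \cite[Sec.~5.1]{die:properties}, with eigenvalues \eqref{eigenvalues} written compactly after \cite[Eq.~(5.1)]{kom-nou-shi:kernel}) into the asserted statement about self-adjoint operators. First I would record that $\{P_\lambda\}_{\lambda\in\Lambda}$ is a complete orthogonal system in $L^2(\mathbb{A},\hat{\Delta}\text{d}\xi)$: orthogonality is \eqref{qH-ortho-a}, and completeness follows from the triangularity \eqref{qH1}--\eqref{qH2} of the $P_\lambda$ over the monomial basis $\{m_\mu\}$, whose span (the $W$-invariant trigonometric polynomials) is dense in $L^2(\mathbb{A},\hat{\Delta}\text{d}\xi)$. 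Consequently the inverse transform $\boldsymbol{F}^{-1}$ \eqref{ft2} carries the orthogonal eigenbasis $\{P_\lambda\}$ to scalar multiples of the standard basis vectors of $\ell^2(\rho+\Lambda,\Delta)$, on which $E_l$ acts as diagonal multiplication by $E_{\lambda,l}$; on the dense span of the $P_\lambda$, a common core for all $\hat{H}_l$, this yields $\hat{H}_l=\boldsymbol{F}\circ E_l\circ\boldsymbol{F}^{-1}$, and the simultaneous diagonalization makes the $\hat{H}_l$ pairwise commute.

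For self-adjointness and the spectrum I would transport structure along the Hilbert-space isomorphism $\boldsymbol{F}$. As the $E_{\lambda,l}$ are real, the multiplication operator $E_l$ on its maximal domain $\{f:\sum_\lambda |E_{\lambda,l}\, f(\rho+\lambda)|^2\Delta_\lambda<\infty\}$ is (unbounded) self-adjoint, hence so is its unitary pushforward, with which the closure of the $q$-difference operator \eqref{dual-int} on the core of trigonometric polynomials is then identified. The spectrum equals the discrete value set $\{E_{\lambda,l}:\lambda\in\Lambda\}$: I would verify that it meets every bounded interval in finitely many points, because $E_{\lambda,l}$ grows without bound as $\lambda$ escapes any finite region in the pertinent directions, so that the spectrum is pure point with the $P_\lambda$ as eigenfunctions.

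The substantive point, which I expect to be the main obstacle, is the nonnegativity $E_{\lambda,l}\ge 0$; for $l\ge 2$ this is not visible factorwise in \eqref{eigenvalues}, since the individual factors $t^{j_i-1}q^{-\lambda_{j_i}}-t^{n+i-1-j_i}$ need not have a fixed sign and $\sum_l E_{\lambda,l}z^l$ need not even possess real roots. I would instead pass to the gap variables $u_j:=q^{-\lambda_j}-q^{-\lambda_{j+1}}\ge 0$ (with $\lambda_{n+1}:=0$), so that $q^{-\lambda_j}=1+\sum_{i\ge j}u_i$, and prove that upon substitution into \eqref{eigenvalues} the eigenvalue $E_{\lambda,l}$ becomes a polynomial in $u_1,\dots,u_n$ with vanishing constant term and nonnegative coefficients; since $q,t\in(0,1)$ forces $u_j\ge 0$, this gives $E_{\lambda,l}\ge 0$ with the minimum value $0$ attained at $\lambda=0$.

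The hard part is the coefficient-positivity for all $n$ and $l$. I would prove it by induction on $n$, splitting the sum \eqref{eigenvalues} according to whether the largest index $n$ occurs among $j_1<\cdots<j_l$; this expresses $E_{\lambda,l}$ through eigenvalues of the same type but lower rank, up to a bookkeeping shift in the $t$-exponents, in the spirit of the branching relations of \cite{die-ems:branching}, and one checks that each step preserves nonnegativity of the $u$-coefficients. The base cases anchor the induction: for $l=1$ one has $E_{\lambda,1}=\sum_j (1+t+\cdots+t^{j-1})u_j$, manifestly positive, and for $l=n$ the sum \eqref{eigenvalues} collapses to the single product $\prod_{j}(q^{-\lambda_j}-t^{n-j})$, a product of nonnegative factors.
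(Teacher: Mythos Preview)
Your treatment of the diagonalization, self-adjointness, and discreteness of the spectrum is sound and mirrors what the paper takes for granted: the theorem is stated without a formal proof environment, the eigenvalue identity $\hat{H}_lP_\lambda=E_{\lambda,l}P_\lambda$ being inherited from \cite[\text{Sec.}~5.1]{die:properties}, and the functional-analytic consequences following from the unitarity of $\boldsymbol{F}$ exactly as you describe. You also correctly single out the nonnegativity of $E_{\lambda,l}$ as the only substantive point---this is precisely what the paper isolates in the Remark immediately following the theorem.

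The gap is in your induction for nonnegativity. Splitting \eqref{eigenvalues} on whether the \emph{largest} index $n$ occurs pulls out, in the branch $j_l=n$, the factor $t^{n-1}q^{-\lambda_n}-t^{l-1}=t^{l-1}\bigl((t^{n-l}-1)+t^{n-l}u_n\bigr)$, whose constant term is negative for every $l<n$; the corresponding branch therefore does \emph{not} have nonnegative $u$-coefficients, and nonnegativity is not preserved branch-wise. (Already for $n=3$, $l=2$ the individual summands of \eqref{eigenvalues} carry negative $u$-coefficients; it is only their combination that is positive.) The other branch fares no better: with all $j_i\le n-1$ the $i$-th factor is $t^{j_i-1}q^{-\lambda_{j_i}}-t^{n+i-1-j_i}$, which differs from the corresponding $(n-1)$-particle factor $t^{j_i-1}q^{-\lambda_{j_i}}-t^{(n-1)+i-1-j_i}$ by a shift in the $t$-exponent that cannot be absorbed into a lower-rank eigenvalue at the same value of~$t$. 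So the promised ``bookkeeping shift'' does not yield a clean recursion.

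The paper's route avoids this by passing to the representation $E_{\lambda,l}=t^{-l(l-1)/2}E_{l,n}(q^{-\lambda_1},tq^{-\lambda_2},\dots,t^{n-1}q^{-\lambda_n};t^{l-1},\dots,t^{n-1})$ with $E_{l,n}(z;y)=\sum_{k}(-1)^{l+k}\boldsymbol{e}_k(z)\boldsymbol{h}_{l-k}(y)$, and then invoking the recurrence from \cite[Lem.~B.2]{die:commuting},
\[
E_{l,n}(z_1,\dots,z_n;y_l,\dots,y_n)=(z_1-y_l)\,E_{l-1,n-1}(z_2,\dots;y_l,\dots,y_n)+E_{l,n-1}(z_2,\dots;y_{l+1},\dots,y_n),
\]
together with the degree-$l$ homogeneity of $E_{l,n}$. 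Here the peeled-off factor is $(z_1-y_l)=q^{-\lambda_1}-t^{l-1}\ge 1-t^{l-1}\ge 0$, manifestly nonnegative (and with nonnegative $u$-coefficients, so your stronger positivity claim is in fact true), and the two residual terms are---after the homogeneity rescaling---genuine lower-rank eigenvalues, so the induction on $n$ closes. If you want to salvage your gap-variable argument, peel off the \emph{first} variable via this recurrence rather than the last index in \eqref{eigenvalues}.
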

Notice in this connection that although the domain of the unbounded operator $\hat{H}_l$ in $L^2(\mathbb{A},\hat{\Delta}\text{d}\xi)$  depends on $l$, the resolvent operators $(\hat{H}_1-z_1)^{-1},\ldots ,(\hat{H}_n-z_n)^{-1}$ (with $z_1,\ldots ,z_n\in\mathbb{C}\setminus [0,+\infty)$) commute as bounded operators on $L^2(\mathbb{A},\hat{\Delta}\text{d}\xi)$, and
the $q$-difference operators $\hat{H}_1,\ldots ,\hat{H}_n$ moreover commute themselves on the joint polynomial eigenbasis $P_\lambda$, $\lambda\in \Lambda$.

\begin{remark} To infer that the eigenvalues $E_{\lambda ,l}$ \eqref{eigenvalues} are nonnegative---thus indeed giving rise to a {\em nonnegative} operator $\hat{H}_l$ in Theorem \ref{dual:thm}---it is helpful to note that  these can be rewritten as (cf.  \cite[\text{Sec.}~5.1]{die:properties})
$$ E_{\lambda ,l}=t^{-l(l-1)/2}
E_{l,n}(q^{-\lambda_1},tq^{-\lambda_2},\ldots, t^{n-1}q^{-\lambda_n}; t^{l-1},t^l,\ldots ,t^{n-1})$$
with
$$E_{l,n} (z_1,\ldots ,z_n;y_l,\ldots ,y_n):= \sum_{0\leq k\leq l} (-1)^{l+k} \boldsymbol{e}_k(z_1,\ldots ,z_n) \boldsymbol{h}_{l-k} (y_l,\ldots, y_n).$$
Here $\boldsymbol{e}_k (z_1,\ldots,z_n)$ and $\boldsymbol{h}_k(y_l,\ldots ,y_n)$ refer to the elementary and the complete symmetric functions of degree $k$  (cf. \cite[Ch. I.2]{mac:symmetric}), with the convention that $\boldsymbol{e}_0=\boldsymbol{h}_0\equiv 1$. The nonnegativity of the eigenvalues now readily follows inductively in the particle number $n$ by means of
 the recurrence (cf. \cite[Lem. B.2]{die:commuting})
\begin{align*}
E_{l,n}(q^{-\lambda_1},tq^{-\lambda_2},&\ldots, t^{n-1}q^{-\lambda_n}; t^{l-1},t^l,\ldots ,t^{n-1})=\\
(q^{-\lambda_1}-t^{l-1})
&E_{l-1,n-1}(tq^{-\lambda_2},\ldots, t^{n-1}q^{-\lambda_n}; t^{l-1},\ldots ,t^{n-1})\\
+\ &
E_{l,n-1}(tq^{-\lambda_2},\ldots, t^{n-1}q^{-\lambda_n}; t^l,\ldots ,t^{n-1}) 
\end{align*}
and the homogeneity 
$$
E_{l,n}(tz_1,\ldots ,tz_n; ty_l,\ldots ,ty_n) \\
=t^l
E_{l,n}(z_1,\ldots,z_n;y_l,\ldots ,y_n) .
$$
\end{remark}

\begin{remark}
The hyperbolic Ruijsenaars-Schneider Hamiltonian with Morse term \eqref{H} can be retrieved as a limit of the Macdonald-Koornwinder $q$-difference operator
\cite{die:difference}. In this limit the center-of-mass is sent to infinity, which causes the hyperoctahedral symmetry of the Macdonald-Koornwinder operator  to be broken: 
while the permutation-symmetry still persists the parity-symmetry is no longer present. 
Indeed, the limit in question restores the translational-invariance of the
interparticle pair-interactions enjoyed by the original Ruijsenaars-Schneider model and gives moreover rise
to additional Morse terms that are not parity-invariant. 
It turns out that most of our results above can in fact be lifted to the Macdonald-Koornwinder level, even though 
such a generalization is presumably somewhat less relevant from a physical point of view.
Specifically, the scattering of the corresponding quantum lattice model associated with the full six-parameter family of Macdonald-Koornwinder polynomials
was briefly discussed in \cite[Sec. 6.4]{die:scattering}, its commuting quantum integrals can be read-off from the Pieri formulas for the Macdonald-Koornwinder polynomials in \cite[Thm. 6.1]{die:properties}, and the pertinent bispectral dual Hamiltonian and its commuting quantum integrals are given by the Macdonald-Koornwinder $q$-difference operator \cite{koo:askey-wilson} and its higher-order commuting $q$-difference operators
\cite[Thm. 5.1]{die:properties}.
\end{remark}

\bibliographystyle{amsplain}

\begin{thebibliography}{000000}

\bibitem[A]{adl:some} M. Adler, Some finite dimensional integrable systems and their scattering behavior, Comm. Math. Phys. {\bf 55} (1977),
195--230. 

\bibitem[D1]{die:commuting} J.F. van Diejen, Commuting difference operators with polynomials eigenfunctions, Compos. Math. {\bf 95} (1995), 183--233.

\bibitem[D2]{die:difference} \bysame, Difference Calogero-Moser systems and finite Toda chains,
J. Math. Phys. {\bf 36} (1995), 1299--1323.

\bibitem[D3]{die:properties} \bysame,
Properties of some families of hypergeometric orthogonal polynomials in several variables, Trans. Amer. Math. Soc. {\bf 351} (1999), 233--270.

\bibitem[D4]{die:scattering} \bysame, Scattering theory of discrete (pseudo) Laplacians on a Weyl chamber,
Amer. J. Math. {\bf 127} (2005), 421--458.

\bibitem[DE]{die-ems:branching} J.F. van Diejen and E. Emsiz, Branching formula for Macdonald-Koornwinder polynomials,  arXiv:1408.2280

\bibitem[DG]{dui-gru:differential} J.J. Duistermaat and F.A. Gr\"unbaum,
Differential equations in the spectral parameter,
Comm. Math. Phys. {\bf 103} (1986), 177--240. 

\bibitem[G]{gru:bispectral} F.A. Gr\"unbaum,
The bispectral problem: an overview, in: Special Functions 2000: Current Perspective and Future Directions,
J. Bustoz, M.E.H. Ismail, and S.K. Suslov (eds.),
NATO Sci. Ser. II Math. Phys. Chem., Vol.  {30}, Kluwer Acad. Publ., Dordrecht, 2001, p. 129--140.

\bibitem[H]{hal:multivariable} M. Halln\"as, Multivariable Bessel polynomials related to the hyperbolic Sutherland model with external Morse potential, 
Int. Math. Res. Not. IMRN {\bf 2009} (9), 1573--1611. 

\bibitem[HL]{hal-lan:unified} M. Halln\"as and E. Langmann, A unified construction of generalized classical polynomials associated with operators of Calogero-Sutherland type, Constr. Approx. {\bf 31} (2010), 309--342.

\bibitem[IM]{ino-mes:discrete} V.I. Inozemtsev and D.V. Meshcheryakov,
The discrete spectrum states of finite-dimensional quantum systems connected with Lie algebras,
Phys. Scripta {\bf 33} (1986), 99--104. 

\bibitem[KLS]{koe-les-swa:hypergeometric} R. Koekoek, P. Lesky, and R.F. Swarttouw, Hypergeometric Orthogonal Polynomials and their q-Analogues, Springer Monographs in Mathematics, Springer Verlag, New York, 2010.

\bibitem[KNS]{kom-nou-shi:kernel} Y. Komori, M. Noumi, and J. Shiraishi, Kernel functions for difference operators of Ruijsenaars type and their applications. SIGMA Symmetry Integrability Geom. Methods Appl. {\bf 5} (2009), Paper 054, 40 pp. 

\bibitem[K]{koo:askey-wilson} T.H. Koornwinder, Askey-Wilson polynomials for
root systems of type $BC$, in: {Hypergeometric Functions on
Domains of Positivity, Jack Polynomials, and Applications} (D. St.
P. Richards, ed.), Contemp. Math., Vol. 138, Amer. Math. Soc.,
Providence, RI, 1992, p. 189--204.

\bibitem[LW]{lev-woj:olshanetsky-perelomov} D. Levi and S.  Wojciechowski,
On the Olshanetsky-Perelomov many-body system in an external field,
Phys. Lett. A {\bf 103 } (1984), 11--14. 

\bibitem[M1]{mac:symmetric}  I.G. Macdonald, {Symmetric Functions and
Hall Polynomials}, Second Edition, Clarendon Press, Oxford, 1995.

\bibitem[M2]{mac:affine} \bysame,
{Affine Hecke Algebras and Orthogonal Polynomials},
Cambridge University Press, Cambridge, 2003.

\bibitem[M]{mar:new} I. Marshall, A new model in the Calogero-Ruijsenaars family,  Comm. Math. Phys. {\bf 338} (2015), 563--587. 

\bibitem[O]{osh:completely} T.  Oshima, Completely integrable systems associated with classical root systems, SIGMA Symmetry Integrability Geom. Methods Appl. {\bf 3} (2007), Paper 061, 50 p. 

\bibitem[R1]{rui:complete} S.N.M. Ruijsenaars, Complete integrability of relativistic Calogero-Moser systems and elliptic function identities. Comm. Math. Phys. {\bf 110} (1987), 191--213. 

\bibitem[R2]{rui:finite-dimensional} \bysame, Finite-dimensional soliton systems, in: {Integrable and
Superintegrable Systems}, B.A. Kupershmidt (ed.), World Scientific Publishing Co., Inc., Teaneck, NJ, 1990, 165--206.

\bibitem[R3]{rui:action} \bysame, Action-angle maps and scattering theory for some finite-dimensional integrable systems. III. Sutherland type systems and their duals,
Publ. Res. Inst. Math. Sci. {\bf 31} (1995), 247--353. 

\bibitem[R4]{rui:factorized} \bysame, Factorized weight functions vs. factorized scattering. Comm. Math. Phys. {\bf 228} (2002), 467--494.

\bibitem[RS]{rui-sch:new} S.N.M. Ruijsenaars and H. Schneider,
A new class of integrable systems and its relation to solitons,
Ann. Physics {\bf 170} (1986), 370--405. 

\bibitem[S]{sch:integrable} H. Schneider, Integrable relativistic N-particle systems in an external potential, 
Phys. D {\bf 26} (1987), 203--209. 

\bibitem[T]{tak:painleve-calogero}  K. Takasaki, Painlev\'e-Calogero correspondence revisited, J. Math. Phys. {\bf 42} (2001),
1443--1473.



\end{thebibliography}

\end{document}